\newtheorem{theorem}{Theorem}[section]
\newtheorem{proposition}[theorem]{Proposition}
\newcommand{\vect}[1]{\mathbf{#1}} %vector
\newcommand{\bfx}{\vect{x}}
\newcommand{\bfy}{\vect{y}}
\newcommand{\bfz}{\vect{z}}
\newcommand{\bfY}{\vect{Y}}
\newcommand{\supp}{\mathfrak{S}}
\newcommand{\abs}[1]{\lvert#1\rvert} %absolute value 
\newcommand{\card}[1]{\abs{#1}} %Cardinality of set
\newcommand{\comp}[1]{{#1}^{\textnormal{c}}} % complement of anything 
\newcommand{\eqdef}{\triangleq} %Equal by definition
\newcommand{\set}[1]{\mathcal{#1}} %set
\newcommand{\zeronorm}[1]{\left\|#1\right\|_{0}}
\newcommand{\indeg}{\textnormal{d}_{\textnormal{in}}}
\newcommand{\outdeg}{\textnormal{d}_{\textnormal{out}}}
\newcommand{\Cal}{C_{\textnormal{l}}}
\newcommand{\Csp}{C_{\textnormal{Sp}}}
\newcommand{\Ceo}{C_{\textnormal{0-u}}}
\newcommand{\Ceof}{C_{\textnormal{0-u,fb}}}
\begin{document}
%
% paper title
% can use linebreaks \\ within to get better formatting as desired
% Do not put math or special symbols in the title.
\title{The Zero-Undetected-Error Capacity Approaches the Sperner Capacity}
%
%
% author names and IEEE memberships
% note positions of commas and nonbreaking spaces ( ~ ) LaTeX will not break
% a structure at a ~ so this keeps an author's name from being broken across
% two lines.
% use \thanks{} to gain access to the first footnote area
% a separate \thanks must be used for each paragraph as LaTeX2e's \thanks
% was not built to handle multiple paragraphs
%

\author{Christoph~Bunte,
        Amos~Lapidoth,~\IEEEmembership{Fellow,~IEEE,}
        and~Alex~Samorodnitsky% <-this % stops a space
\thanks{C. Bunte and A. Lapidoth are with the Signal and Information Processing
Laboratory at ETH Zurich. E-mail: \{bunte,lapidoth@isi.ee.ethz.ch\}.}% <-this % stops a space
\thanks{A. Samorodnitsky is with the Institute of Compute Science at The Hebrew
University of Jerusalem. His work is partially supported by grants from BSF and
ISF. E-mail: salex@cs.huji.ac.il}% <-this % stops a space
\thanks{This paper was presented in part at the 2014 ISIT conference.}
}

\maketitle

% As a general rule, do not put math, special symbols or citations
% in the abstract or keywords.
\begin{abstract}
Ahlswede, Cai, and Zhang proved that, in the noise-free limit, 
the zero-undetected-error capacity is lower-bounded by the 
Sperner capacity of the channel graph, 
and they conjectured equality. Here we derive an upper bound that proves the conjecture.
\end{abstract}

% Note that keywords are not normally used for peerreview papers.
\begin{IEEEkeywords}
Sperner capacity, zero-undetected-error capacity, directed graphs, discrete
memoryless channels
\end{IEEEkeywords}

% For peer review papers, you can put extra information on the cover
% page as needed:
% \ifCLASSOPTIONpeerreview
% \begin{center} \bfseries EDICS Category: 3-BBND \end{center}
% \fi
%
% For peerreview papers, this IEEEtran command inserts a page break and
% creates the second title. It will be ignored for other modes.
\IEEEpeerreviewmaketitle

\section{Introduction}
\label{sec:intro}
\IEEEPARstart{A}{zero}-undetected-error decoder (z.u.e.\ decoder) declares that a message was transmitted only if
it is the only message that could have produced the observed output. If the
output could have been produced by two or more messages, it declares an erasure.
Such a decoder thus never errs: it either produces the correct message or an
erasure. 

The zero-undetected-error capacity (z.u.e.\ capacity) $\Ceo$ of a channel is the supremum of all
rates that are achievable with a z.u.e.\ decoder in the sense that
the probability of erasure tends to zero as the blocklength tends to
infinity~\cite{csiszar1995channel, ahlswede1996erasure}. (It does
not matter whether we define~$\Ceo$ using an average or a maximal erasure
probability criterion.)
Restricting the decoding rule cannot help, so $\Ceo$ never exceeds the
Shannon capacity~$C$. 

Although partial results exist (see Section~\ref{sec:background}), 
the z.u.e.\ capacity
of general discrete memoryless channels (DMCs) is still unknown. 
The focus of this paper is the z.u.e.\ capacity of nearly
noise-free channels. More precisely, we focus on 
$\varepsilon$-noise channels, that is, DMCs whose input alphabet $\set{X}$ is a
subset of their output alphabet $\set{Y}$ and whose transition law $W$ satisfies
\begin{equation}
\label{eq:low_noise}
W(x|x)\geq 1-\varepsilon\quad \text{for all $x\in \set{X}$.}
\end{equation}
Here and throughout we assume that $0\leq \varepsilon<1$. 
For $\varepsilon$-noise channels we derive an upper bound on~$\Ceo$. We
then apply this result to study the limit of $\Ceo$ as $\varepsilon$ tends to zero. 
Ahlswede, Cai, and Zhang proved
that this limit is lower-bounded by the Sperner capacity of a
certain related graph, and they conjectured equality~\cite{ahlswede1996erasure}. 
Our upper bound proves this conjecture. 

The Sperner capacity is defined using graph-theoretic language
in Section~\ref{sec:graphs}. 
Here we give an alternative characterization in terms of codes (see
also~\cite{ahlswede1996erasure}). 
For this we need some standard notation. 

A DMC is specified by its transition law $W(y|x)$, $x\in \set{X}$, $y\in \set{Y}$, where $\set{X}$
and $\set{Y}$ are finite input and output alphabets. 
Feeding a sequence of input symbols $\bfx=(x^{(1)},\ldots,x^{(n)})$ to a DMC of
transition law $W$ produces a random sequence of output
symbols $\bfY=(Y^{(1)},\ldots,Y^{(n)})$ whose joint probability mass function 
(PMF) is
\begin{equation}
W^n(\bfy|\bfx) \eqdef \prod_{1\leq j\leq n} W(y^{(j)}|x^{(j)}),\quad \bfy \in \set{Y}^n.
\end{equation}
The support of $W$ is the set of all pairs $(x,y) \in \set{X}\times
\set{Y}$ for which $W(y|x)$ is positive; it is denoted by $\supp(W)$. 
Similarly, if $P$ is a PMF on $\set{X}$, then $\supp(P)$ 
denotes the set of all $x\in \set{X}$ for which $P(x)$ is positive.
We write~$PW$ for the PMF on $\set{Y}$ induced by $P$ and the channel $W$ 
\begin{equation}
(PW)(y) = \sum_{x\in \set{X}} P(x) W(y|x),\quad y \in \set{Y}.
\end{equation}
If $A\subseteq \set{X}$, then we write $P(A)$ in lieu of $\sum_{x\in A} P(x)$. 
The Cartesian product of two sets~$A$ and~$B$ is denoted by~$A\times B$. 
The $n$-fold Cartesian product of $A$ with itself is denoted by $A^n$, and 
the cardinality of $A$ is denoted by $\card{A}$. All logarithms are
natural logarithms, and we adopt the convention $0\log \frac{1}{0} = 0$. 

We define a blocklength-$n$ Sperner code for a DMC $W$ with $\set{X}\subseteq \set{Y}$ and $W(x|x)>0$ for all $x\in
\set{X}$ as a collection of length-$n$ codewords $\bfx_1,\ldots,\bfx_M$ with the property
\begin{equation}
\label{eq:zero_error_detection}
W^n(\bfx_m|\bfx_{m'}) =0\quad \text{whenever $m\neq m'$}.
\end{equation}
The rate of the code is $n^{-1} \log M$. The largest rate of a Sperner code
is a function of the channel law~$W$ and the blocklength~$n$. In fact, it 
depends on~$W$ only via its support~$\supp(W)$. 
The supremum over $n$ of the largest rate of blocklength-$n$ Sperner codes
is the Sperner capacity~$\Csp$ of the channel. 

We emphasize that a Sperner code is not a zero-error code. Indeed,
the definition does not preclude the existence of a sequence $\bfy \in
\set{Y}^n$ such that $W^n(\bfy|\bfx_{m})>0$ and $W^n(\bfy|\bfx_{m'})>0$ for some
$m\neq m'$. What cannot happen with a Sperner code is that one 
codeword is corrupted by the channel into another codeword.
In other words, at the output we either receive the correct codeword or some
sequence of output letters that is not a codeword. 

In the notation above, Ahlswede, Cai, and Zhang proved in
\cite[Theorem 2]{ahlswede1996erasure} that for $\varepsilon$-noise channels,
\begin{equation}
\label{eq:ahlswede_lb_221}
\varliminf_{\varepsilon\to 0} \Ceo \geq \Csp,
\end{equation}
and they conjectured equality.
Our main result is the following inequality.
\begin{theorem}
\label{thm:upper_bound}
For every $\varepsilon$-noise channel,
\begin{equation}
\label{eq:upper_bound}
\Ceo \leq \log\bigl(e^{\Csp}+\varepsilon
\card{\set{X}}(\card{\set{Y}}-1)\bigr).
\end{equation}
\end{theorem}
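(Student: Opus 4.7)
The plan is to count the codewords of any blocklength-$n$ z.u.e.\ code $C=\{\bfx_1,\ldots,\bfx_M\}$ with average erasure probability $\lambda$ by partitioning each ``correctly decoded'' output $\bfy\in\set{D}_m^\ast$ into a clean part (matching the transmitted codeword) and a noisy part, and then extracting a Sperner sub-code on the clean coordinates. Let $A(\bfx)=\{\bfy:W^n(\bfy|\bfx)>0\}$ and $\set{D}_m^\ast=A(\bfx_m)\setminus\bigcup_{m'\neq m}A(\bfx_{m'})$; the regions $\set{D}_m^\ast$ are pairwise disjoint, and the z.u.e.\ hypothesis gives $\sum_{m}W^n(\set{D}_m^\ast|\bfx_m)\geq M(1-\lambda)$.

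For each pair $(\bfx_m,\bfy)$ with $\bfy\in\set{D}_m^\ast$, set $T=\{j:y^{(j)}\neq x_m^{(j)}\}$ (the noisy coordinates) and $\bfe=\bfy|_T$, and for each fixed pair $(T,\bfe)$ define
\[
C_{T,\bfe}=\bigl\{\bfx_m\in C\,:\,e_j\neq x_m^{(j)}\ \text{for all }j\in T,\ (\bfx_m|_{T^c},\bfe)\in\set{D}_m^\ast\bigr\}.
\]
The crux of the argument is a Sperner-restriction lemma: \emph{for each $(T,\bfe)$, the projections $\{\bfx_m|_{T^c}:\bfx_m\in C_{T,\bfe}\}$ form a Sperner code on the coordinates $T^c$, and therefore number at most $e^{\card{T^c}\,\Csp}$ by the supremum definition of Sperner capacity.} For two distinct codewords $\bfx_m,\bfx_{m'}\in C_{T,\bfe}$, the z.u.e.\ property forces $W^n((\bfx_m|_{T^c},\bfe)|\bfx_{m'})=0$; yet every factor $W(e_j|x_{m'}^{(j)})$ with $j\in T$ is strictly positive, because membership of $\bfx_{m'}$ in $C_{T,\bfe}$ means that $(\bfx_{m'}|_{T^c},\bfe)\in A(\bfx_{m'})$. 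The vanishing must therefore come from the $T^c$-factors, which is precisely the Sperner condition $W^{\card{T^c}}(\bfx_m|_{T^c}|\bfx_{m'}|_{T^c})=0$.

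Combining the Sperner bound on the clean coordinates with the $\varepsilon$-noise estimate $W(e|x)\leq\varepsilon$ for $e\neq x$, and parametrizing each noisy coordinate by one of the $\card{\set{X}}(\card{\set{Y}}-1)$ pairs $(x,e)$ with $e\neq x$, one obtains
\[
M(1-\lambda)\leq\sum_{k=0}^{n}\binom{n}{k}\,e^{k\Csp}\bigl(\varepsilon\card{\set{X}}(\card{\set{Y}}-1)\bigr)^{n-k}=\bigl(e^{\Csp}+\varepsilon\card{\set{X}}(\card{\set{Y}}-1)\bigr)^n,
\]
where $k=\card{T^c}$. Taking $\tfrac{1}{n}\log$ and letting $n\to\infty$ and $\lambda\to 0$ then yields the announced bound on $\Ceo$. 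I expect the main obstacle to be the Sperner-restriction lemma---specifically, the small but essential observation that the $T$-factors of $W^n((\bfx_m|_{T^c},\bfe)|\bfx_{m'})$ are automatically strictly positive for every $\bfx_{m'}\in C_{T,\bfe}$, which is what forces the vanishing dictated by z.u.e.\ to migrate to the clean coordinates $T^c$ and unlocks the per-coordinate Sperner factor $e^{\Csp}$.
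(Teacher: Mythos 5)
Your proof is correct, and it takes a genuinely different---and more elementary---route than the paper's. The paper first shows that the $n$-letter Forney bound with inputs uniform over their support is asymptotically tight, applies Jensen's inequality, and then leans on two graph-theoretic facts: a Caro--Wei-type bound $\sum_{x}1/(1+\indeg(x,G))\leq\rho(G)$, where $\rho(G)$ is the maximum size of an induced acyclic subgraph, and the nontrivial identity $\lim_{n}\frac{1}{n}\log\rho(G^{n})=\Csp$, which yields $\rho(G^{n})\leq e^{n\Csp}$. You bypass all of that: the Sperner capacity enters only through its definition as a supremum, via $\card{C_{T,\bfe}}\leq e^{\card{T^{c}}\Csp}$, and no acyclic-subgraph machinery or multi-letter achievability result is needed. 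Your Sperner-restriction lemma is sound: for distinct $\bfx_m,\bfx_{m'}\in C_{T,\bfe}$, the fact that $(\bfx_m|_{T^{c}},\bfe)\in\set{D}_m^\ast$ forces $W^{n}\bigl((\bfx_m|_{T^{c}},\bfe)\,\big|\,\bfx_{m'}\bigr)=0$, while $(\bfx_{m'}|_{T^{c}},\bfe)\in\set{D}_{m'}^\ast$ guarantees $W(e_j|x_{m'}^{(j)})>0$ for every $j\in T$, so the zero must come from the $T^{c}$-factors; applying this to both ordered pairs gives the Sperner condition in both directions. Two small points deserve to be made explicit. First, the map $\bfx_m\mapsto\bfx_m|_{T^{c}}$ is injective on $C_{T,\bfe}$ (otherwise the disjoint regions $\set{D}_m^\ast$ and $\set{D}_{m'}^\ast$ would share the output $(\bfx_m|_{T^{c}},\bfe)$), so the Sperner bound really does bound $\card{C_{T,\bfe}}$ and not merely the number of distinct projections. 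Second, to obtain the factor $\card{\set{X}}(\card{\set{Y}}-1)$ you must refine the partition from $(T,\bfe)$ to $(T,\bfe,\bfx_m|_{T})$, as your final count implicitly does; this is harmless since the Sperner property passes to sub-classes (partitioning only by $(T,\bfe)$ would give $\varepsilon\card{\set{Y}}$ in place of $\varepsilon\card{\set{X}}(\card{\set{Y}}-1)$, which is in fact at least as strong whenever $\card{\set{X}}\geq 2$). What the paper's longer route buys is the multi-letter characterization of $\Ceo$ and the results on induced acyclic subgraphs of strong powers, which have independent interest; what yours buys is a short, self-contained converse that works directly with the code and the average erasure probability.
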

Combining Theorem~\ref{thm:upper_bound} with~\eqref{eq:ahlswede_lb_221} 
proves their conjecture:
\begin{theorem}
\label{thm:limit}
For $\varepsilon$-noise channels,
\begin{equation}
\label{eq:eq}
\lim_{\varepsilon \to 0} \Ceo = \Csp,
\end{equation}
where the limit is to be understood in a uniform sense with respect to all
$\varepsilon$-noise channels with given $\supp(W)$. 
\end{theorem}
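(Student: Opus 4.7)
\emph{The plan.} Theorem~\ref{thm:limit} reduces to a short sandwich argument once Theorem~\ref{thm:upper_bound} is available; there is no substantial obstacle beyond Theorem~\ref{thm:upper_bound} itself, whose proof is the real work of the paper.

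First, I would pass to the limit superior in the inequality~\eqref{eq:upper_bound}. Since its right-hand side is continuous in $\varepsilon$ and equals $\Csp$ at $\varepsilon = 0$, Theorem~\ref{thm:upper_bound} yields
\begin{equation}
\varlimsup_{\varepsilon \to 0} \Ceo \;\leq\; \Csp.
\end{equation}
Combining this with the Ahlswede-Cai-Zhang lower bound~\eqref{eq:ahlswede_lb_221}, namely $\varliminf_{\varepsilon \to 0} \Ceo \geq \Csp$, forces the limit to exist and to equal $\Csp$.

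For the uniformity claim, I would observe that once $\supp(W)$ is fixed, so are $\card{\set{X}}$, $\card{\set{Y}}$, and $\Csp$ (recall that the largest Sperner-code rate---and hence $\Csp$---depends on $W$ only through $\supp(W)$). Consequently, the upper bound $\log\bigl(e^{\Csp}+\varepsilon\card{\set{X}}(\card{\set{Y}}-1)\bigr)$ depends on $W$ only through $\varepsilon$, so that for every $\delta > 0$ there exists a threshold $\varepsilon_{0} > 0$, depending only on the common support, for which this bound is strictly below $\Csp + \delta$ whenever $\varepsilon < \varepsilon_{0}$. The matching lower bound is likewise uniform, because the code construction establishing Theorem~2 of~\cite{ahlswede1996erasure} produces, for every sufficiently small $\varepsilon$, codes that are zero-undetected-error for \emph{every} $\varepsilon$-noise channel sharing the given support. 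The two uniform inequalities together yield the uniform limit asserted in the theorem.
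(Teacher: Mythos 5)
Your proposal is correct and follows exactly the paper's route: the paper derives this result as an immediate corollary by combining the upper bound of Theorem~\ref{thm:upper_bound} (whose right-hand side tends to $\Csp$ as $\varepsilon\to 0$) with the Ahlswede--Cai--Zhang lower bound~\eqref{eq:ahlswede_lb_221}. Your additional remarks on uniformity---that both bounds depend on the channel only through $\varepsilon$ and $\supp(W)$---are a welcome elaboration of a point the paper leaves implicit.
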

A proof of Theorem~\ref{thm:upper_bound} is given in Section~\ref{sec:proof}. 
Before providing an outline of this proof,
we try to explain why Corollary~\ref{thm:limit} is plausible. If we use a Sperner code 
in conjunction with a z.u.e.\ decoder, then an erasure can occur
only if the codeword is corrupted, which happens with probability at
most~$1-(1-\varepsilon)^n$. 
This suggests that~$\Csp$ should be a lower bound to~$\Ceo$
when~$\varepsilon$ is very small (ignoring the issue that~$n$ tends to infinity
before~$\varepsilon$ tends to zero). 
Conversely, any code whose maximal probability of erasure under z.u.e.\ decoding is
smaller than~$(1-\varepsilon)^n$ must be a Sperner code. Since for all rates
strictly smaller than $\Ceo$ the
probability of erasure can be driven to zero exponentially fast (see below),
this suggests that $\Csp$
should be an upper bound on $\Ceo$ for small
$\varepsilon$ (ignoring the issue that the exponent of the erasure
probability may become arbitrarily small as $\varepsilon$ becomes small and the
rate approaches~$\Ceo$)

As to the outline of the proof of Theorem~\ref{thm:upper_bound}, we first
show that a multi-letter version of Forney's lower bound on $\Ceo$ is
asymptotically tight even when the input distributions are restricted to be
uniform over their support (Section~\ref{sec:forney}). We then upper-bound
the multi-letter expression using Jensen's inequality followed by algebraic
manipulations that yield a still looser bound. Thanks to the input distribution
being uniform, this looser bound depends
only on $\varepsilon$ and the support of $W$. 
The final step is to use graph-theoretic techniques, which are introduced in
Section~\ref{sec:graphs}, to obtain the desired upper bound. These techniques include upper-bounding a sum
that depends only on the in-degrees of the vertices of a graph $G$ by the
maximum size of any induced acyclic subgraph of $G$. They also include
showing that the Sperner capacity of a graph $G$ can be expressed as the
limit as $n$ tends to infinity of $1/n$ times the logarithm of the maximum
cardinality of any induced acyclic subgraph of the $n$-fold strong product of
$G$ with itself.

\section{Background}
\label{sec:background}
To put our result into perspective, we briefly review some of the
literature on the z.u.e.\ capacity and related concepts. 
We begin with Forney's lower bound
\begin{equation}
\label{eq:forney_lb}
\Ceo \geq \max_{P} \sum_{y\in \set{Y}} (PW)(y) \log \frac{1}{P(\set{X}(y))},
\end{equation}
where the maximum is over all PMFs on the input alphabet $\set{X}$, and where $\set{X}(y)$
denotes the set of all $x\in \set{X}$ for which $W(y|x)$ is positive~\cite{forney1968exponential}. 
His bound can be derived using standard random coding where each component of each codeword is drawn IID from
a PMF $P$. In general, \eqref{eq:forney_lb} is not tight.\footnote{An example where it is
not tight is the Z-channel \cite{telatar1992phd}.}
A tighter lower bound was derived in \cite{telatar1992phd, ahlswede1996erasure,
csiszar1995channel} using random coding over constant composition codes
\begin{equation}
\label{eq:hui_lb}
\Ceo \geq \max_{P}\min_{\substack{V\ll W:\\PV=PW}}I(P,V).
\end{equation}
The minimum in~\eqref{eq:hui_lb} is over all auxiliary DMCs $V(y|x), x\in \set{X}, y \in \set{Y}$ such that $V(y|x)=0$ whenever
$W(y|x)=0$ (in short $V\ll W)$ and such that $V$ induces the same output
distribution under $P$ as the true channel $W$. 

Since any code for the product channel $W^n$ is also a code for the channel $W$
of $n$ times the blocklength and $1/n$ times the rate, it follows that the
bounds~\eqref{eq:forney_lb} and~\eqref{eq:hui_lb} can be improved by applying
them to $W^n$ and normalizing the result by~$1/n$. For example, the $n$-letter
version of~\eqref{eq:forney_lb} is
\begin{equation}
\label{eq:multi_forney}
\Ceo \geq \frac{1}{n} \max_{P} \sum_{\bfy \in \set{Y}^n} (PW^n)(\bfy) \log
\frac{1}{P(\set{X}^n(\bfy))},
\end{equation}
where the maximum is over all PMFs $P$ on $\set{X}^n$, and where $\set{X}^n(\bfy)$ denotes the 
set of all $\bfx \in \set{X}^n$ for which~$W^n(\bfy|\bfx)>0$. 
A numerical evaluation in~\cite{ahlswede1996erasure} 
of the single-letter and two-letter versions
of~\eqref{eq:hui_lb} for a particular channel 
suggests that a strict improvement is possible and hence that~\eqref{eq:hui_lb}
is not always tight. 

The $n$-letter version of~\eqref{eq:hui_lb} becomes tight as $n$ tends to
infinity~\cite{csiszar1995channel, ahlswede1996erasure}. Since the proof
of~\eqref{eq:hui_lb} shows that for all rates less than the RHS
of~\eqref{eq:hui_lb} the probability of erasure can be driven to zero
exponentially fast, it follows that this is also true for all rates less than the
$n$-letter version of the bound, and hence for all rates less than~$\Ceo$. 

In Section~\ref{sec:forney} we prove that also the $n$-letter version of the weaker
bound~\eqref{eq:forney_lb} is asymptotically tight, and that this is true 
even when the maximization is restricted to PMFs that are uniform
over their support. This result will be crucial in the proof of
Theorem~\ref{thm:upper_bound}.
 
We have already pointed out that the Shannon capacity $C$ is an upper bound to
$\Ceo$. 
%Since computing $C$ is easy and computing $\Ceo$ seems hard, it is desirable to know when equality holds. 
In fact, this bound is often tight. 
Indeed, Pinsker and Sheverdyaev~\cite{pinsker1970transmission} proved that $\Ceo=C$ whenever the
bipartite channel graph is acyclic. 
The bipartite channel graph is the undirected bipartite graph whose two independent sets of vertices are the input and output
alphabets of the channel, and where there is an edge between an 
input~$x$ and an output~$y$ if~$W(y|x)>0$ (it is customary 
to draw the inputs on the left and the outputs on the right and to label the
edges with the transition probabilities). 
Acyclic means that we cannot find an integer~$n\geq
2$, distinct inputs $x_1,\ldots,x_n$ and distinct outputs
$y_1,\ldots,y_{n}$ such that $W(y_j|x_j)>0$ and $W(y_j|x_{j+1})>0$ for
all~$j\in\{1,\ldots,n\}$ where $x_{n+1} =x_1$. Important examples of channels with
acyclic bipartite channel graphs are the binary erasure channel and the
Z-channel.

Csisz\'ar and Narayan~\cite{csiszar1995channel} proved that $\Ceo=C$ whenever there exist 
positive functions~$A$ and~$B$ such that
\begin{equation}
\label{eq:factorization}
W(y|x) = A(x)B(y),\quad \text{whenever $W(y|x)>0$.}
\end{equation}
They also proved that in this case~\eqref{eq:hui_lb} is tight. Their result is
stronger than Pinsker-Sheverdyaev because
every DMC with an acyclic bipartite channel graph possesses a factorization of its
channel law in the sense of~\eqref{eq:factorization}, but the converse is
not true. For example, the graph shown in
Figure~\ref{fig:factorize_example}
contains a cycle, yet the channel law factorizes with the choice
$A=B=1/\sqrt{2}$. %(x)=1$ for all~$x$ and $B(y) = 1/2$ for all~$y$. 
\begin{figure}
\centering
\begin{tikzpicture}[scale=2]
\draw[-] (0,0) node[left] {$2$} -- node[above] {$1/2$} (2,0) node[right] {$2$};
\draw[-] (0,1) node[left] {$1$} -- node[above] {$1/2$} (2,1) node[right] {$1$};
\draw[-] (0,2) node[left] {$0$} -- node[above] {$1/2$} (2,2) node[right] {$0$};
\draw[-] (0,0) -- node[above] {$1/2$} (2,1);
\draw[-] (0,1) -- node[above] {$1/2$} (2,2);
\draw[-] (0,2) -- node[above, near start] {$1/2$} (2,0);
\end{tikzpicture}
\caption{The graph contains a cycle but the channel law factorizes.}
\label{fig:factorize_example}
\end{figure}
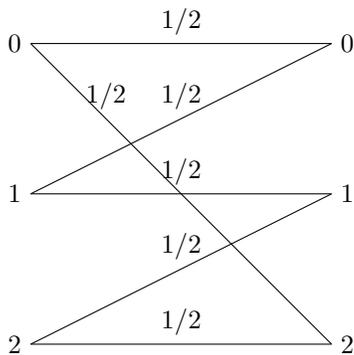
In fact, they conjecture that a necessary condition for~$\Ceo=C$ is 
that~\eqref{eq:factorization} hold on some capacity-achieving
subset of inputs (which is clearly also sufficient). 

The fact that $\Ceo=C$ for the Z-channel can be used to characterize channels
with positive z.u.e.\ capacity. Indeed, suppose there exist two input
symbols $x$ and $x'$ and an output symbol~$y$ such 
that~$W(y|x)>0$ and~$W(y|x')=0$. By combining all output symbols other than $y$ into a single
distinct output symbol, and by using only the inputs $x$ and $x'$, we can reduce the channel
to a Z-channel with crossover probability $1-W(y|x)$. For this channel~$C>0$,
and hence also $\Ceo>0$. Conversely, if $W(y|x)>0$ implies that $W(y|x')>0$ for
all~$x' \in \set{X}$, then any received sequence of output symbols can be produced by
every codeword and the decoder must always erase, so $\Ceo$ must be zero. 

We also mention that z.u.e.\ capacity is a special case of $d$-capacity 
\cite{csiszar1995channel}, and the lower bound~\eqref{eq:hui_lb} is a special
case of a lower bound on $d$-capacity~\cite{hui1983phd, csiszar1981graph,
csiszar1995channel}. It
was proved in~\cite{balakirsky1995converse} that this lower bound on $d$-capacity is tight
for binary-input channels. The proof is complicated and is, in fact, not needed
when one is interested only in z.u.e.\ capacity. Indeed, the binary-input case
is easily solved using the Pinsker-Sheverdyaev result and the following
proposition, which can also be used to improve on~\eqref{eq:upper_bound} when
$\card{\set{Y}}$ is larger than $\card{\set{X}}+2^{\card{\set{X}}}-1$ (see
Section~\ref{sec:remarks}).
\begin{proposition}
\label{prop:combine}
Suppose that the output symbols $y,y'\in \set{Y}$ are such that for every $x \in \set{X}$,
\begin{equation}
W(y|x)>0 \iff W(y'|x)>0.
\end{equation}
%$W(y|x)>0$ if, and only if, $W(y'|x)>0$. 
Then the z.u.e.\ capacity is unaltered when we combine 
$y$ and $y'$ into a single output symbol distinct from all other
output symbols. 
\end{proposition}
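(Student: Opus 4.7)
The approach is to show that any blocklength-$n$ code achieves identical z.u.e.\ erasure probabilities on $W$ and on the collapsed channel $W'$ with output alphabet $\set{Y}' = (\set{Y} \setminus \{y, y'\}) \cup \{y^\star\}$ defined by $W'(y^\star|x) = W(y|x) + W(y'|x)$ and $W'(\tilde y|x) = W(\tilde y|x)$ for every other letter. The guiding observation is that a z.u.e.\ decoder depends on the received word $\bfy$ only through the list $\set{X}^n(\bfy)$ of inputs that could have produced it, and the hypothesis $W(y|x)>0 \iff W(y'|x)>0$ ensures $\set{X}'(y^\star) = \set{X}(y) = \set{X}(y')$, so collapsing does not alter these lists.

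I would next introduce the coordinatewise collapse map $\phi \colon \set{Y}^n \to (\set{Y}')^n$ that sends $y$ and $y'$ to $y^\star$ and fixes every other letter. Two elementary identities drive the proof. Distributing products over sums coordinate by coordinate gives, for every input $\bfx \in \set{X}^n$ and every $\bfy' \in (\set{Y}')^n$,
\begin{equation*}
(W')^n(\bfy'|\bfx) = \sum_{\bfy \in \phi^{-1}(\bfy')} W^n(\bfy|\bfx),
\end{equation*}
so $\phi$ pushes the output distribution of $W$ (under any input) forward to that of $W'$. Intersecting the equal single-coordinate input sets gives
\begin{equation*}
\set{X}^n(\bfy) = (\set{X}')^n\bigl(\phi(\bfy)\bigr),
\end{equation*}
which says that the z.u.e.\ decoder's input-list is invariant under the collapse.

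Combining the two identities, for any code $\{\bfx_1, \ldots, \bfx_M\}$ the erasure event on $W$ under transmission of $\bfx_m$,
\begin{equation*}
E = \bigl\{\bfy : \card{\set{X}^n(\bfy) \cap \{\bfx_1,\ldots,\bfx_M\}} \geq 2\bigr\},
\end{equation*}
is precisely the $\phi$-preimage of the corresponding erasure event $E'$ on $W'$; the first identity then yields $\sum_{\bfy \in E} W^n(\bfy|\bfx_m) = \sum_{\bfy' \in E'} (W')^n(\bfy'|\bfx_m)$. Hence every code achieves the same erasure probabilities on both channels, so any rate that is achievable on one is achievable on the other and $\Ceo(W) = \Ceo(W')$. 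There is no serious obstacle here: once one pinpoints $\set{X}^n(\bfy)$ as the only functional of $\bfy$ that the z.u.e.\ decoder sees, the rest is bookkeeping with the pushforward under $\phi$.
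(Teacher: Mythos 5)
Your proposal is correct and is essentially the paper's proof written out in full detail: the paper's one-sentence argument is exactly your observation that the z.u.e.\ decoder sees $\bfy$ only through the list $\set{X}^n(\bfy)$, which is invariant under the collapse. Your explicit pushforward identity under $\phi$ and the identification of the erasure event as a $\phi$-preimage are the bookkeeping the paper leaves implicit.
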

\begin{proof}
The set of messages that cannot be ruled out when $\bfy$ is observed at the
output is unchanged when any occurrence of $y$ in $\bfy$ is replaced with $y'$ or vice versa.
\end{proof}
Using Proposition~\ref{prop:combine} we can reduce the output alphabet of a
given DMC $W$ with input alphabet $\set{X}$ to at most $2^{\card{\set{X}}}-1$ symbols without changing
the z.u.e.\ capacity (there are $2^{\card{\set{X}}}-1$ possible combinations of
inputs that can connect to a given output). In particular, when the input
alphabet is binary, we can reduce the channel to an asymmetric binary erasure channel 
(possibly with some transition probabilities equal to zero). Since this channel
has an acyclic bipartite channel graph, 
computing the z.u.e.\ capacity of a binary-input channel can thus be reduced to
computing the Shannon capacity of an asymmetric binary erasure channel. 

Unlike the Shannon capacity, the z.u.e.\ capacity can be increased by
feedback~\cite{bunte2012zero}. In fact, with feedback the z.u.e.\ capacity
$\Ceof$ is~\cite{nakiboglu2012errors, bunte2012zero}
\begin{equation}
\Ceof = \begin{cases} C& \text{if $\Ceo>0$,}\\0& \text{if $\Ceo=0$.}\end{cases}
\end{equation}

A concept closely related to the z.u.e.\ capacity is the
listsize capacity~\cite{ahlswede1996erasure, telatar1997zero}: 
Suppose that instead of erasing when more than one message could have produced 
the observed output, the decoder instead produces a list of all messages that it cannot 
rule out. The listsize capacity is the supremum of all rates that are achievable
in the sense that the $\rho$-th moment of the size of the list produced by the
decoder tends to one as the blocklength tends to infinity; 
it is denoted by $\Cal(\rho)$. Here~$\rho$ can be any positive number; 
the case $\rho=1$ has been called average-listsize capacity~\cite{ahlswede1996erasure}.
It is not difficult to show that $\Cal(\rho)$ never exceeds $\Ceo$ 
and that $\Cal(\rho)>0$ if, and only if,
$\Ceo>0$~\cite{telatar1997zero,bunte2013listsize}. 
Lower bounds on~$\Cal(\rho)$ analogous to~\eqref{eq:forney_lb}
and~\eqref{eq:hui_lb} can be found in~\cite{forney1968exponential,
ahlswede1996erasure, telatar1997zero, bunte2013listsize}. Results for channels with feedback can be
found in~\cite{bunte2013cutoff,bunte2013listsize}, where, in particular, it is shown that feedback
can increase the listsize capacity. 

It was proved in~\cite{ahlswede1996erasure} that Corollary~\ref{thm:limit} is
true when $\Ceo$ is replaced with $\Cal(1)$
\begin{equation}
\label{eq:list_sperner}
\lim_{\varepsilon\to 0} \Cal(\rho)\Big\vert_{\rho=1} = \Csp.
\end{equation}
This, in fact, holds for all $\rho>0$. Indeed, Theorem~\ref{thm:upper_bound} 
and the fact that $\Cal(\rho)$ is upper-bounded by $\Ceo$ for all $\rho>0$
imply that
\begin{equation}
\varlimsup_{\varepsilon \to 0} \Cal(\rho) \leq \Csp.
\end{equation}
The reverse inequality for $0<\rho <1$ follows from~\eqref{eq:list_sperner} and the
fact that $\Cal(\rho)$ is nonincreasing in $\rho$. The case~$\rho
\geq 1$ is not difficult to obtain from a generalization to all $\rho>0$ of 
Forney's lower bound on $\Cal(1)$~\cite{forney1968exponential,bunte2013listsize}.

It is shown in~\cite{bunte2013listsize} that if the channel law factorizes in the sense of~\eqref{eq:factorization}, then 
\begin{equation}
\label{eq:list_cutoff}
\Cal(\rho) = \max_{P} \frac{E_0(\rho,P)}{\rho},
\end{equation}
where $E_0(\rho,P)$ is Gallager's function (see~\cite{gallager1968information}).
The RHS of~\eqref{eq:list_cutoff} is also known as the cutoff
rate~\cite{arikan1996inequality}. In fact, the listsize capacity relates to the
cutoff rate in much the same way that the z.u.e.\ capacity relates
to the Shannon capacity~\cite{bunte2013cutoff,bunte2013listsize}. 

\section{A Multi-Letter Formula for $\Ceo$} 
\label{sec:forney}
In this section we show that~\eqref{eq:multi_forney}
is asymptotically tight even when the input PMFs are restricted to be uniform
over their support. 
\begin{theorem}
For any DMC, 
\begin{equation}
\label{eq:multiletter}
\Ceo = \lim_{n\to\infty} \frac{1}{n} \max_{P\in U_n} \sum_{\bfy \in \set{Y}^n } (PW^n)(\bfy) \log
\frac{1}{P(\set{X}^n(\bfy))},
\end{equation}
where $U_n$ denotes the collection of PMFs on $\set{X}^n$ that are uniform over their
support. Moreover, the limit is equal to the supremum.
\end{theorem}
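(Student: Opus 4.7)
I would establish the two inequalities between $\Ceo$ and the RHS of~\eqref{eq:multiletter} separately and then prove a superadditivity property to upgrade the limit to a supremum. The inequality $\Ceo\geq\,$RHS is immediate from the $n$-letter Forney bound~\eqref{eq:multi_forney}: for every $n$ and every PMF $P$ on $\set{X}^n$ one has $\Ceo\geq\frac{1}{n}\sum_{\bfy}(PW^n)(\bfy)\log\frac{1}{P(\set{X}^n(\bfy))}$, so $\Ceo$ dominates a fortiori the maximum over the smaller class $U_n$; the inequality survives passage to the limit.

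For the reverse inequality, I would fix any rate $R<\Ceo$ and invoke the definition of~$\Ceo$ to obtain, for all sufficiently large~$n$, a z.u.e.\ code of blocklength~$n$ with $M\geq e^{nR}$ codewords $\bfx_1,\ldots,\bfx_M$ whose average erasure probability $\epsilon_n$ tends to zero. Let $P_n$ be uniform on $\{\bfx_1,\ldots,\bfx_M\}$, so that $P_n\in U_n$, and set $L(\bfy)=|\{m:W^n(\bfy|\bfx_m)>0\}|$. Then $P_n(\set{X}^n(\bfy))=L(\bfy)/M$, and the Forney expression for $P_n$ evaluates to $\log M-\sum_{\bfy}(P_nW^n)(\bfy)\log L(\bfy)$. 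The event $\{L(\bfY)\geq 2\}$ is precisely the z.u.e.\ erasure event under $\bfY\sim P_nW^n$ and therefore has probability $\epsilon_n$; combined with the trivial bounds $\log L(\bfy)\leq\log M$ in general and $\log L(\bfy)=0$ when $L(\bfy)=1$, this shows that the subtracted sum is at most $\epsilon_n\log M$. Hence
\begin{equation*}
\frac{1}{n}\max_{P\in U_n}\sum_{\bfy}(PW^n)(\bfy)\log\frac{1}{P(\set{X}^n(\bfy))}\geq (1-\epsilon_n)\frac{\log M}{n}\geq (1-\epsilon_n)R,
\end{equation*}
and the right-hand side tends to~$R$ as $n\to\infty$. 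Since $R<\Ceo$ was arbitrary, the liminf of the RHS of~\eqref{eq:multiletter} is at least~$\Ceo$.

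To upgrade the limit to a supremum, I would show that the sequence
\begin{equation*}
f(n)\eqdef\max_{P\in U_n}\sum_{\bfy\in\set{Y}^n}(PW^n)(\bfy)\log\frac{1}{P(\set{X}^n(\bfy))}
\end{equation*}
is superadditive in~$n$. Given $P_1\in U_{n_1}$ and $P_2\in U_{n_2}$, the product $P_1\otimes P_2$ is uniform on the Cartesian product of the two supports and hence lies in $U_{n_1+n_2}$. Because $\set{X}^{n_1+n_2}((\bfy_1,\bfy_2))=\set{X}^{n_1}(\bfy_1)\times\set{X}^{n_2}(\bfy_2)$, the logarithm splits additively and the Forney expression at $P_1\otimes P_2$ equals the sum of those at $P_1$ and $P_2$. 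This yields $f(n_1+n_2)\geq f(n_1)+f(n_2)$, and Fekete's lemma gives $\lim f(n)/n=\sup f(n)/n$.

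The only step that requires real care is the bound $\sum_{\bfy}(P_nW^n)(\bfy)\log L(\bfy)\leq\epsilon_n\log M$; it hinges on identifying $\{L(\bfy)\geq 2\}$ as the z.u.e.\ erasure event and on the crude but sufficient bound $\log L(\bfy)\leq\log M$, which is the right inequality precisely because erasures become rare as $n$ grows. Everything else is routine.
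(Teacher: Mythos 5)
Your proposal follows essentially the same route as the paper: achievability via the $n$-letter Forney bound, the converse via the uniform PMF on the codebook together with the crude bound $\log L(\bfy)\leq\log M$ on the erasure event, and superadditivity via product distributions to identify the limit with the supremum. The one step you gloss over is the claim that $P_n\in U_n$ and that $P_n(\set{X}^n(\bfy))=L(\bfy)/M$: both require the codewords $\bfx_1,\ldots,\bfx_M$ to be \emph{distinct}, and under the average-erasure criterion you invoke, a code with vanishing average erasure probability may still contain repeated codewords (duplicating one codeword only adds $O(1/M)$ to the average). If $\bfx_m=\bfx_{m'}$ for some $m\neq m'$, the uniform distribution on the codebook is not uniform over its support, and $L(\bfy)$ no longer equals $M\cdot P_n(\set{X}^n(\bfy))$. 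The paper closes this by working with the \emph{maximal} erasure probability, under which a repeated codeword forces conditional erasure probability one and is therefore impossible; alternatively you could expurgate the at most $\epsilon_n M$ messages with conditional erasure probability one before defining $P_n$. Either fix is one line, but as written the step is missing.
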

\begin{proof}
It is straightforward to verify that the sequence on the RHS
of~\eqref{eq:multiletter} without the $1/n$ factor is superadditive, which 
implies that the limit is equal to the supremum.\footnote{A
sequence $a_1,a_2,\ldots$ of real numbers is superadditive if $a_{n+m} \geq
a_n+a_m$ for every $n$ and $m$. For superadditive sequences $a_n/n$ tends to
$\sup_n a_n/n$ \cite[Problem 98]{polya1998problems}.}  
Let us denote this limit by $\lambda$. 
Achievability, i.e., $\Ceo \geq \lambda$, follows
because~\eqref{eq:multi_forney} holds for every $n$. As to the
converse, let $\bfx_1,\ldots,\bfx_{M}$ be a codebook of  blocklength $n$ and rate $R$ 
with maximal probability of erasure
under z.u.e.\ decoding less than some~$\delta\in(0,1)$
\begin{equation}
\label{eq:max_prob_of_erasure}
\max_{1\leq m \leq M} \sum_{\substack{\bfy\in \set{Y}^n:\\M(\bfy)>1}}
W^n(\bfy|\bfx_m)< \delta,
\end{equation}
where $M(\bfy)$ denotes the number of messages that cannot be ruled out when $\bfy$
is observed at the output
\begin{equation}
\label{eq:list}
M(\bfy) = \bigl|\bigl\{1\leq m \leq M: W^n(\bfy|\bfx_m)>0\bigr\}\bigr|.
\end{equation}
Condition~\eqref{eq:max_prob_of_erasure} implies that $\bfx_{m} \neq \bfx_{m'}$
when $m\neq m'$ because otherwise, as we next argue, the conditional probability
of erasure given that the $m$-th message was sent would be one. 
Indeed, if $\bfx_m = \bfx_{m'}$ for some~$m\neq
m'$, then $M(\bfy) \geq 2$ whenever $W^n(\bfy|\bfx_m)>0$
(because then also $W^n(\bfy|\bfx_{m'})>0$), and
hence 
\begin{equation}
\sum_{\substack{\bfy\in \set{Y}^n:\\M(\bfy)>1}}
W^n(\bfy|\bfx_m) = 1.
\end{equation}
Having established that the codewords are distinct, we choose $P$ to be the uniform PMF on the codebook. 
Then $P \in U_n$ and 
\begin{equation}
\label{eq:206}
P\bigl(\set{X}^n(\bfy)\bigr) = \frac{M(\bfy)}{M},\quad \text{for all
$\bfy\in \set{Y}^n$.}
\end{equation}
We further observe that 
\begin{align}
\lambda&\geq\frac{1}{n} \sum_{\bfy \in \set{Y}^n}  (P W^n)(\bfy) \log
\frac{1}{P(\set{X}^n(\bfy))}\label{eq:212}\\
&=R- \frac{1}{n} \sum_{\substack{\bfy\in \set{Y}^n:\\ M(\bfy) >1}}  (P
W^n)(\bfy) \log M(\bfy)\label{eq:214}\\
&\geq R\Biggl(1- \sum_{\substack{\bfy\in \set{Y}^n:\\ M(\bfy) >1}}  (P
W^n)(\bfy)\Biggr)\label{eq:214_2}\\
&=R\Biggl(1- \frac{1}{M}\sum_{m=1}^{M} \sum_{\substack{\bfy\in \set{Y}^n:\\M(\bfy)>1}}
W^n(\bfy|\bfx_m)\Biggr)\label{eq:214_3}\\
&> R(1-\delta),\label{eq:215}
\end{align}
where \eqref{eq:212} follows because $\lambda$ is the supremum of a sequence
whose $n$-th term is no smaller than the RHS of~\eqref{eq:212};
where~\eqref{eq:214} follows from~\eqref{eq:206} and the fact that $\log 1 =0$;
where~\eqref{eq:214_2} follows because $M(\bfy) \leq M$;
where~\eqref{eq:214_3} follows from the choice of $P$; 
and where~\eqref{eq:215} follows from~\eqref{eq:max_prob_of_erasure}.
Thus, for any sequence of \mbox{blocklength-$n$} \mbox{rate-$R$} codebooks with maximal
probability of erasure approaching zero, 
\begin{equation}
R \leq \lambda.
\end{equation}
A standard expurgation argument shows that this is also true when we 
replace the maximal probability of erasure with the average (over the messages)
probability of erasure. 
\end{proof}

\section{Graph-Theoretic Preliminaries}
\label{sec:graphs}
A directed graph (or simply a graph) $G$ is described by its finite vertex
set $V(G)$
and its edge set $E(G)\subset V(G)\times V(G)$.
We say that there is an edge from $x$ to $y$ in $G$ if $(x,y) \in E(G)$.
We always assume that $G$ does not contain self-loops, i.e., that $(x,x) \notin
E(G)$ for all~$x\in V(G)$.

The strong product of two graphs $G$ and $H$ is denoted by $G\times H$;
its vertex set is $V(G)\times V(H)$, and there is an edge
from $(x,y)$ to $(x',y')$ in $G \times H$ if either $(x,x') \in E(G)$ and $(y,y') \in E(H)$, 
or if $(x,x') \in E(G)$ and~$y=y'$, or if~$x=x'$ and~$(y,y') \in E(H)$. The
$n$-fold strong product of $G$ with itself is denoted by $G^n$. 

The subgraph of $G$ induced by $A\subseteq V(G)$ is the graph whose 
vertex set is~$A$ and whose edge set is $E(G)\cap (A\times A)$.

A subset $A\subseteq V(G)$ is an independent set in~$G$ if the subgraph
of $G$ it induces has no edges, i.e., if $E(G) \cap (A\times A) =
\emptyset$. The maximum cardinality of an independent set in $G$ is denoted by
$\alpha(G)$. 
We define the Sperner capacity of $G$ as
%\footnote{Some authors 
%prefer to define Sperner capacity in terms of cliques instead of independent
%sets (see, e.g.,~\cite{csiszar2011information}). Which of the two definition we choose
%does not matter as long as we keep in mind that an independent set in $G$ is a clique
%in the ``complement of $G$'' and vice versa, where ``complement'' means that the vertex set is
%the same and the edge set is the set complement of $E(G)$ in $V(G)\times V(G)$ minus the self-loops.}
\begin{equation}
\label{eq:sperner}
\Sigma(G) = \lim_{n\to\infty} \frac{1}{n} \log \alpha(G^n),
\end{equation}
where the limit on the RHS is equal to the supremum because 
the sequence $\alpha(G^{1}), \alpha(G^{2}),\ldots$ is supermultiplicative.\footnote{A sequence
$a_1,a_2,\ldots$ of real numbers is supermultiplicative if $a_{n+m} \geq a_n
a_m$ for all $m$ and $n$.}
The reader is warned that this definition is not standard. However, it is
equivalent to the original definition given in~\cite{gargano1992qualitative}; see
Appendix~\ref{appendix:sperner_def}.

A path in $G$ is a sequence of $n\geq 2$ distinct vertices $x_1,\ldots,x_n$ such
that $(x_j,x_{j+1}) \in E(G)$ for all $j\in \{1,\ldots,n-1\}$. The first vertex
in this path is $x_1$, and the last vertex is $x_n$. We say that there is a
path from~$x$ to~$y$ in~$G$ if there is a path in~$G$ whose first vertex is
$x$ and whose last vertex is $y$.

A cycle is a path $x_1,\ldots,x_n$ with $(x_n,x_1) \in E(G)$. 
(Note that with this definition a bidirectional edge is a cycle.)
We say that $G$ is acyclic if it does not contain a cycle. 
The maximum cardinality of a subset $A\subseteq V(G)$ that induces an 
acyclic subgraph of $G$ is denoted by $\rho(G)$. 

The following two results will be key in the proof of
Theorem~\ref{thm:upper_bound}. The first is that~$\alpha$ can 
be replaced with~$\rho$ in~\eqref{eq:sperner}.
\begin{theorem}
\label{thm:alex}
For every graph $G$, 
\begin{equation}
\label{eq:alex}
\Sigma(G) = \lim_{n\to\infty} \frac{1}{n} \log \rho(G^{n}),
\end{equation}
and the limit is equal to the supremum.
\end{theorem}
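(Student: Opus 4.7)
The plan is to establish separately the inequalities $\Sigma(G)\le\liminf_n\tfrac{1}{n}\log\rho(G^n)$ and $\Sigma(G)\ge\sup_n\tfrac{1}{n}\log\rho(G^n)$, while simultaneously verifying that $\tfrac{1}{n}\log\rho(G^n)$ converges to its supremum. The first inequality is immediate: every independent set in $G^n$ is edgeless and hence trivially acyclic, so $\alpha(G^n)\le\rho(G^n)$, and~\eqref{eq:sperner} gives the conclusion.

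For the supremum-limit equivalence I verify that $\rho(G^n)$ is supermultiplicative. Given acyclic induced subgraphs on $A\subseteq V(G^n)$ and $B\subseteq V(G^m)$, I claim that $A\times B\subseteq V(G^{n+m})$ is also acyclic. By the strong-product edge relation, any edge between distinct $(a,b),(a',b')\in A\times B$ requires $a=a'$ or $(a,a')\in E(G^n)$, and similarly in the $B$-block. A cycle in $A\times B$ therefore projects, after collapsing consecutive $a$-duplicates, either to a closed walk of positive length in $A$ (when the $a$-coordinates are not all constant) or to a cycle in $B$ (otherwise). Since a closed walk in a self-loop-free directed graph contains a cycle, this contradicts the acyclicity of $A$ or of $B$. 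Hence $\rho(G^{n+m})\ge\rho(G^n)\rho(G^m)$, and Fekete's lemma closes this step.

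The main content is the bound $\Sigma(G)\ge\tfrac{1}{n}\log\rho(G^n)$ for each fixed $n$. Fix an acyclic $A\subseteq V(G^n)$ with $R=|A|=\rho(G^n)$ and a topological order $\pi:A\to\{1,\ldots,R\}$, and identify $V(G^{nm})$ with $V(G^n)^m$ via associativity of the strong product. To each $\bfa=(a_1,\ldots,a_m)\in A^m$ assign the rank $r(\bfa)=\pi(a_1)+\cdots+\pi(a_m)\in\{m,\ldots,mR\}$. The key claim is that every rank level of $A^m$ is an independent set in $G^{nm}$: an edge between distinct $\bfa,\bfa'$ in a common level would require, coordinate by coordinate, either $a_i=a_i'$ (giving $\pi(a_i)=\pi(a_i')$) or $(a_i,a_i')\in E(G^n)$ (giving $\pi(a_i)<\pi(a_i')$ by the topological ordering), with strict increase at the coordinate where they differ, contradicting $r(\bfa)=r(\bfa')$. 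Since $r$ takes at most $mR$ values, pigeonhole produces a level of size at least $R^m/(mR)$, so
\[
\frac{1}{nm}\log\alpha(G^{nm})\ \ge\ \frac{\log R}{n}\ -\ \frac{\log(mR)}{nm}.
\]
Letting $m\to\infty$ and using that $\tfrac{1}{N}\log\alpha(G^N)\to\Sigma(G)$ along the subsequence $N=nm$ yields $\Sigma(G)\ge\tfrac{1}{n}\log R$; taking the supremum over $n$ completes the argument.

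The step I expect to demand the most care is the independence of a rank level, which rests on a clean parsing of the strong-product edge relation in the isomorphism $G^{nm}\cong(G^n)^m$. Once the topological-order trick is in place, the remaining pieces---supermultiplicativity of $\rho$, pigeonholing over rank values, and the $m\to\infty$ limit---are routine bookkeeping.
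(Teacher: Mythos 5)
Your proof is correct and follows essentially the same route as the paper's: the easy direction via $\alpha(G^n)\le\rho(G^n)$, and the hard direction via a topological order on a maximum acyclic set $A$, an additive rank (the paper's ``weight'') on $A^m$, and a pigeonhole over the at most $mR$ rank levels. The only cosmetic differences are that you inline the paper's intermediate step $\Sigma(H)=\log\card{V(H)}$ for acyclic $H$ rather than stating it separately, and you prove supermultiplicativity of $\rho$ (which is in fact redundant given your two inequalities) by projecting cycles to closed walks instead of by monotonicity of the topological labels.
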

In particular, Theorem~\ref{thm:alex} asserts that
\begin{equation}
\label{eq:alex_cor}
\rho(G^{n}) \leq e^{n \Sigma(G)},\quad\text{for all $n$.}
\end{equation}
A proof of Theorem~\ref{thm:alex} is provided in Appendix~\ref{appendix:a}.
(An anonymous reviewer pointed out that a statement equivalent to 
Theorem~\ref{thm:alex} is apparently well-known; see
Appendix~\ref{appendix:sperner_def}.)

The number of edges of $G$ ending in a vertex $x$ is called the
in-degree of $x$ in $G$ and is denoted by $\indeg(x,G)$, i.e., 
\begin{equation}
\indeg(x,G) = \bigl|\bigl\{x' \in V(G): (x',x) \in E(G) \bigr\}\bigr|.
\end{equation}
The next result is a slight generalization of
\cite[p. 95, Theorem 1]{alon2011probabilistic}.
\begin{theorem}
\label{thm:caro}
For every graph $G$, 
\begin{equation}
\sum_{x\in V(G)} \frac{1}{1+\indeg(x,G)} \leq \rho(G).
\end{equation}
\end{theorem}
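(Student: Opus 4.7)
The plan is to mimic the classical probabilistic proof of the Caro--Wei bound for independent sets in undirected graphs, adapting it to the directed setting so that the output is an induced acyclic subgraph rather than an independent set. Since being acyclic is a strictly weaker condition than being independent, the same kind of random selection should succeed with a bound involving in-degrees rather than undirected degrees.

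Concretely, I would draw a uniformly random total ordering $\pi$ of $V(G)$ and define the random set
\[
A = \bigl\{x \in V(G) : \pi(x) > \pi(x') \text{ for every } x' \text{ with } (x',x) \in E(G) \bigr\},
\]
so that $x$ is included in $A$ exactly when $x$ comes after all of its in-neighbors under $\pi$. To see that $G[A]$ is acyclic for every realization of $\pi$, suppose for contradiction that $x_1,\ldots,x_k,x_1$ is a cycle inside $A$. Each edge $(x_i,x_{i+1})$ together with $x_{i+1}\in A$ forces $\pi(x_i)<\pi(x_{i+1})$, and chaining these inequalities around the cycle yields $\pi(x_1)<\pi(x_1)$, a contradiction. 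Hence $\rho(G)\geq |A|$ with probability one.

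It then remains to evaluate $\mathbb{E}[|A|]$. For each fixed $x$, the event $\{x \in A\}$ is determined entirely by the relative ranks under $\pi$ of the $1+\indeg(x,G)$ vertices consisting of $x$ and its in-neighbors; by the symmetry of the uniform permutation, the probability that $x$ holds the top rank in this set is exactly $1/(1+\indeg(x,G))$. Linearity of expectation then yields
\[
\mathbb{E}[|A|] = \sum_{x\in V(G)} \frac{1}{1+\indeg(x,G)},
\]
and fixing any realization of $\pi$ that achieves at least this expected value completes the proof. I do not foresee a real obstacle; the only subtlety is to orient the inequality in the definition of $A$ correctly, so that a hypothetical cycle is forced to become a strictly increasing chain under $\pi$ that closes on itself.
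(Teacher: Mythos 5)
Your proof is correct and is essentially identical to the paper's own argument: both draw a uniformly random total order, keep the vertices that come after all their in-neighbors, observe that the induced subgraph is acyclic because edges within it always point to higher-ranked vertices, and conclude via linearity of expectation. No gaps.
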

A proof of Theorem~\ref{thm:caro} is provided in Appendix~\ref{appendix:b}.

For DMCs $W$ with $\set{X}\subseteq \set{Y}$ and $W(x|x)>0$ for 
every $x\in \set{X}$, we define the associated graph~$G(W)$ 
to have vertex set $\set{X}$ and edge set comprising all 
ordered pairs~$(x,y)$ of distinct elements of $\set{X}$ for which $W(y|x)>0$. 
Thus, for such channels we have
\begin{equation}
\label{eq:sperner_equal_detection}
\Csp(W) = \Sigma\bigl(G(W)\bigr).
\end{equation}
Indeed, every Sperner code for $W$ of blocklength $n$ is an independent set in
$G(W)^n$ and vice versa. 
We note the identity 
\begin{equation}
\label{eq:induced_graph_power}
G(W^n) = G(W)^{n}.
\end{equation}

\section{Proof of Theorem~\ref{thm:upper_bound}}
\label{sec:proof}
Applying Jensen's Inequality to the RHS of~\eqref{eq:multiletter} yields
\begin{equation}
\Ceo \leq \sup_{n\geq 1} \frac{1}{n} \max_{P\in U_n} \log
\sum_{\bfy\in \supp(PW^n)} \frac{(PW^n)(\bfy)}{P(\set{X}^n(\bfy))}.
\end{equation}
It thus suffices to show that for all $P\in U_n$,
\begin{equation}
\label{eq:305}
\sum_{\bfy \in \supp(PW^n)} \frac{(PW^n)(\bfy)}{P(\set{X}^n(\bfy))} \leq
\bigl(e^{\Csp} + \varepsilon \card{\set{X}}(\card{\set{Y}}-1)\bigr)^n.
\end{equation}
Fix then some $P \in U_n$. Since the labels do not matter, we may assume for simplicity of notation that
$\set{X}=\{0,\ldots,\card{\set{X}}-1\}$ and
$\set{Y}=\{0,\ldots,\card{\set{Y}}-1\}$, where $\card{\set{Y}} \geq
\card{\set{X}}$. The distribution on~$\set{Y}^n$ induced by~$P$ and~$W^n$ can be written as 
\begin{equation}
\label{eq:271}
(PW^n)(\bfy) = \sum_{\substack{\bfz\in \set{Y}^n:\\\bfy+\bfz\in \set{X}^n}} P(\bfy+ \bfz)
W^n(\bfy|\bfy+ \bfz),
\end{equation}
where addition is to be understood component-wise modulo~$\card{\set{Y}}$. 
The $\varepsilon$-noise property \eqref{eq:low_noise} implies
\begin{equation}
\label{eq:279}
W^n(\bfy|\bfy+ \bfz) \leq \varepsilon^{\zeronorm{\bfz}},\quad \text{if $\bfy+
\bfz \in \set{X}^n$},
\end{equation}
where $\zeronorm{\bfz}$ denotes the number of nonzero components of $\bfz$. 
Thus, starting with the LHS of~\eqref{eq:305},
\begin{align}
&\sum_{\bfy \in \supp(PW^n)} \frac{(PW^n)(\bfy)}{P(\set{X}^n(\bfy))}\\
&= \sum_{\bfy \in \supp(PW^n)} 
\sum_{\substack{\bfz\in \set{Y}^n:\\\bfy+\bfz\in \set{X}^n}} \frac{P(\bfy+ \bfz)
W^n(\bfy|\bfy+ \bfz)}{P(\set{X}^n(\bfy))}\label{eq:379}\\
&= \sum_{\bfz \in \set{Y}^n} \sum_{\substack{\bfy \in \set{Y}^n:\\\bfy+ \bfz \in
\set{X}^n\\P(\bfy+ \bfz)>0\\W^n(\bfy|\bfy+ \bfz)>0}} \frac{P(\bfy+ \bfz)
W^n(\bfy|\bfy+ \bfz)}{P(\set{X}^n(\bfy))}\label{eq:382}\\
&\leq \sum_{\bfz \in \set{Y}^n} \varepsilon^{\zeronorm{\bfz}}\!\!\!\!\!\!\!\! 
\sum_{\substack{\bfy \in
\set{Y}^n:\\\bfy+ \bfz \in
\set{X}^n\\P(\bfy+ \bfz)>0\\W^n(\bfy|\bfy+ \bfz)>0}} \frac{P(\bfy+
\bfz)}{P(\set{X}^n(\bfy))}\label{eq:385}\\
&=\sum_{\bfz \in \set{Y}^n} \varepsilon^{\zeronorm{\bfz}}\!\!\!\!\!\!\!\!
\sum_{\substack{\bfy \in \set{X}^n:\\ P(\bfy)>0\\W^n(\bfy-\bfz|\bfy)>0}}
\frac{1}{\card{\{\bfx\in \supp(P): W^n(\bfy-\bfz|\bfx)>0\}}},\label{eq:389}
\end{align}
where~\eqref{eq:379} follows from~\eqref{eq:271}; where~\eqref{eq:382} follows
by changing the order of summation and dropping terms that are zero; where~\eqref{eq:385} follows
from~\eqref{eq:279}; and where~\eqref{eq:389} follows by substituting $\bfy$ for
$\bfy+\bfz$ and because $P$ is uniform
over its support. 
For every $\bfz\in \set{Y}^n$, let $P_\bfz$ be any PMF on $\set{X}^n$ of support
\begin{equation}
\supp(P_\bfz)=\{\bfx \in \set{X}^n: P(\bfx)W^n(\bfx-\bfz|\bfx)>0 \}.
\end{equation}
(In fact, $P_\bfz$ could be any nonnegative function with the above support.)
Also define for every~$\bfz \in \set{Y}^n$ the channel 
\begin{equation}
W_\bfz(\bfy|\bfx) = W^n(\bfy-\bfz|\bfx),
\end{equation}
with input alphabet $\supp(P_\bfz)$ and output alphabet $\set{Y}^n$. 
Since $\supp(P_\bfz)
\subseteq \supp(P)$, 
\begin{multline}
\label{eq:some_ineq_810}
\card{\{\bfx\in \supp(P): W^n(\bfy-\bfz|\bfx)>0\}}\\ \geq \card{\{\bfx\in
\supp(P_\bfz): W_\bfz(\bfy|\bfx)>0\}}.
\end{multline}
Using~\eqref{eq:some_ineq_810} we can 
upper-bound the inner sum on the RHS of \eqref{eq:389} by 
\begin{equation}
\label{eq:315}
\sum_{\bfy \in \supp(P_\bfz)} \frac{1}{\card{\{\bfx\in \supp(P_\bfz):
W_\bfz(\bfy|\bfx)>0\}}}.
\end{equation}
This sum can also be written as
\begin{equation}
\label{eq:324}
\sum_{\bfy \in V(G(W_\bfz))} \frac{1}{1+\indeg(\bfy,G(W_\bfz))},
\end{equation}
where $G(W_\bfz)$ is the graph associated with the channel $W_\bfz$ (see
Section~\ref{sec:graphs}). Since~\eqref{eq:324} is upper-bounded 
by~$\rho(G(W_\bfz))$ (Theorem~\ref{thm:caro}), we thus have
\begin{equation}
\label{eq:830_7}
\sum_{\bfy \in \supp(PW^n)} \frac{(PW^n)(\bfy)}{P(\set{X}^n(\bfy))} \leq \sum_{\bfz
\in \set{Y}^n} \varepsilon^{\zeronorm{\bfz}} \rho\bigl(G(W_\bfz)\bigr).
\end{equation}
We next argue that
\begin{equation}
\label{eq:331}
\rho\bigl(G(W_\bfz)\bigr) \leq
\card{\set{X}}^{\zeronorm{\bfz}}\rho\bigl(G(W)^{n-\zeronorm{\bfz}}\bigr),
\end{equation}
where we define $\rho(G(W)^{0}) =1$. 
When~$\zeronorm{\bfz}=n$, then~\eqref{eq:331} is trivial, so we assume that
$0\leq \zeronorm{\bfz} < n$. Let~$\bfx(\bfz)$ denote the restriction of $\bfx\in \set{X}^n$ to the nonzero components of
$\bfz$, and let $\bfx(\comp{\bfz})$ denote the restriction of~$\bfx$ to the
zero components of~$\bfz$. We will prove~\eqref{eq:331} by contradiction. In
order to reach a contradiction, assume
that for some integer~$\eta$ strictly larger than the RHS of~\eqref{eq:331} there exist 
distinct vertices~$\bfx_1,\ldots,\bfx_{\eta}$ in $\supp(P_\bfz)$ that induce an acyclic
subgraph of~$G(W_\bfz)$. 
Partition this collection of vertices by placing into the
same class all $\bfx_j$'s that have the same restriction $\bfx_j(\bfz)$. Since there
are $\card{\set{X}}^{\zeronorm{\bfz}}$ such classes, one of them must contain 
$\kappa>\rho(G(W)^{n-\zeronorm{\bfz}})$ vertices; call them $\bfx'_1,\ldots,\bfx'_{\kappa}$. 
Since $\bfx'_1,\ldots,\bfx'_{\kappa}$ are distinct, and since their restrictions
to the nonzero components of $\bfz$ are identical, their restrictions to the
zero components of $\bfz$, i.e.,
$\bfx'_1(\comp{\bfz}),\ldots,\bfx'_{\kappa}(\comp{\bfz})$ must all be distinct. 
Also, if $\bfx,\bfy \in \supp(P_\bfz)$ and $\bfx(\bfz) = \bfy(\bfz)$, then 
\begin{equation}
W_\bfz(\bfy|\bfx)>0 \iff
W^{n-\zeronorm{\bfz}}\bigl(\bfy(\comp{\bfz})\big|\bfx(\comp{\bfz})\bigr)>0.
\end{equation}
It follows that the subgraph of $G(W_\bfz)$ induced by
$\bfx'_1,\ldots,\bfx'_{\kappa}$ is isomorphic to the subgraph of
$G(W^{n-\zeronorm{\bfz}})$ induced by
$\bfx'_1(\comp{\bfz}),\ldots,\bfx'_{\kappa}(\comp{\bfz})$.\footnote{The
isomorphism is $\bfx \mapsto \bfx(\comp{\bfz})$.} 
And since the former
is acyclic, so must the latter be, which is a contradiction because 
$G(W^{n-\zeronorm{\bfz}}) = G(W)^{n-\zeronorm{\bfz}}$ and 
$\kappa > \rho(G(W)^{n-\zeronorm{\bfz}})$. 

Having established~\eqref{eq:331}, we further note that by~\eqref{eq:alex_cor}
and~\eqref{eq:sperner_equal_detection},
\begin{equation}
\label{eq:912_3432}
\rho\bigl(G(W)^{n-\zeronorm{\bfz}}\bigr) \leq e^{(n-\zeronorm{\bfz}) \Csp}.
\end{equation}

By combining~\eqref{eq:830_7}, \eqref{eq:331}, and~\eqref{eq:912_3432}, we
obtain
\begin{align}
&\sum_{\bfy \in \supp(PW^n)} \frac{(PW^n)(\bfy)}{P(\set{X}^n(\bfy))}\notag\\
&\quad\leq 
\sum_{\bfz \in \set{Y}^n} \varepsilon^{\zeronorm{\bfz}}
\card{\set{X}}^{\zeronorm{\bfz}}
e^{(n-\zeronorm{\bfz}) \Csp}\label{eq:766_342}\\
&\quad=\sum_{k=0}^n \binom{n}{k} (\card{\set{Y}}-1)^k \varepsilon^k \card{\set{X}}^k
e^{(n-k)\Csp},\label{eq:768_342}
\end{align}
where the equality follows because the summand on the RHS of~\eqref{eq:766_342} depends on $\bfz$ only via
$\zeronorm{\bfz}$ and there are~$\binom{n}{k}(\card{\set{Y}}-1)^k$ elements in $\set{Y}^n$
with exactly $k$ nonzero components. 
This completes the proof because the RHS of~\eqref{eq:768_342} 
is equal to the RHS of~\eqref{eq:305}.\qed

\section{Concluding Remarks}
\label{sec:remarks}
\begin{enumerate}
\item In Theorem~\ref{thm:upper_bound} we may replace $\card{\set{Y}}$ 
with $\card{\set{X}}+2^{\card{\set{X}}}-1$. Indeed, using Proposition~\ref{prop:combine} and 
noting that the $\varepsilon$-noise property and~$\Csp$ are preserved if we
combine only output symbols in $\set{Y}\setminus \set{X}$, we can reduce the 
output alphabet to at most $\card{\set{X}} +2^{\card{\set{X}}}-1$ symbols.
%In fact, we can reduce it to at most $\card{\set{X}}
%+2^{\card{\set{X}}}-2$ symbols because combining pairs $y,y'$ where
%$y \in \set{X}$ and $y' \in \set{Y}\setminus \set{X}$ into $y$ also preserves the
%$\varepsilon$-noise property and $\Csp$.

\item The Sperner capacity of a graph and the Sperner capacity of a channel are
of course just different formulations of the same problem. Indeed, in
Section~\ref{sec:graphs} we noted that
$\Csp(W) = \Sigma(G(W))$, where $G(W)$ is the graph associated with the
$\varepsilon$-noise channel~$W$. Conversely, 
we may associate with each directed graph $G$ a canonical
$\varepsilon$-noise channel $W_\varepsilon(G)$ by choosing $\set{X}=\set{Y}=V(G)$ and
\begin{equation}
W_\varepsilon(G)(y|x) = \begin{cases} 1-\varepsilon& y=x,\\ \frac{\varepsilon}{\outdeg(x,G)}& (x,y) \in
E(G),\\0&\text{otherwise,}\end{cases}
\end{equation}
if $\outdeg(x,G)\geq 1$, and otherwise
\begin{equation}
W_\varepsilon(G)(y|x) =\begin{cases} 1&\text{if $y=x$,}\\0 &\text{if $y\neq
x$,}\end{cases}
\end{equation}
where $\outdeg(x,G)$ denotes the out-degree of $x$ in $G$, i.e., the
number of edges of $G$ emanating from the vertex $x$. Then for any $\varepsilon
\in (0,1)$, 
\begin{equation}
\Sigma(G) = \Csp(W_\varepsilon(G)). 
\end{equation}

\item We mentioned in the introduction that $\varliminf_{\varepsilon\to 0} \Ceo
\geq \Csp$ was proved in \cite{ahlswede1996erasure}.
Here, we offer a simple proof based on the $n$-letter version of Forney's
lower bound~\eqref{eq:multi_forney}. Given $\delta>0$ choose $n$ 
so that there exists a Sperner code of size~$e^{n(\Csp-\delta)}$.
Let $P$ be the uniform PMF on the codebook and note that
\begin{equation}
\label{eq:531}
(PW^n)(\bfx) \geq P(\bfx)(1-\varepsilon)^n\quad\text{for all $\bfx \in \set{X}^n$.} 
\end{equation}
For this $P$, 
\begin{align}
\Ceo &\geq \frac{1}{n} \sum_{\bfy \in \set{Y}^n} (PW^n)(\bfy) \log
\frac{1}{P(\set{X}^n(\bfy))}\label{eq:536}\\
&\geq \frac{1}{n} \sum_{m=1}^{e^{n(\Csp-\delta)}} (PW^n)(\bfx_m) \log
\frac{1}{P(\set{X}^n(\bfx_m))}\label{eq:538}\\
&= \frac{1}{n} \sum_{m=1}^{e^{n(\Csp-\delta)}} (PW^n)(\bfx_m) \log
\frac{1}{P(\bfx_m)}\label{eq:539}\\
&\geq (1-\varepsilon)^n\frac{1}{n} \sum_{m=1}^{e^{n(\Csp-\delta)}}
P(\bfx_m) \log \frac{1}{P(\bfx_m)}\label{eq:540}\\
&=(1-\varepsilon)^n(\Csp-\delta),\label{eq:541}
\end{align}
where~\eqref{eq:536} follows from~\eqref{eq:multi_forney}; 
where~\eqref{eq:538} follows because the summand is nonnegative; 
where~\eqref{eq:539} follows from the choice of $P$ and the property of Sperner codes
\eqref{eq:zero_error_detection}; 
where~\eqref{eq:540} follows from~\eqref{eq:531}; and where~\eqref{eq:541}
follows from the choice of $P$. 
Letting first~$\varepsilon \to 0$ and then~$\delta \to 0$ completes
the proof.
\item For some channels the bound in
Theorem~\ref{thm:upper_bound} can be sharpened. For example, for
the canonical $\varepsilon$-noise channel associated
with the cyclic orientation of a triangle $\Csp = \log
2$~\cite{calderbank1993sperner,blokhuis1993sperner}, and it is shown
in \cite{bunte2013zero} that $\Ceo \leq \log 2$ for every $\varepsilon \in (0,1)$. 
\end{enumerate}

\appendices
\section{Proof of Theorem~\ref{thm:alex}}
\label{appendix:a}
We shall need the elementary fact that the vertices of any acyclic graph $G$
can be labeled with the numbers~$1,\ldots,\card{V(G)}$ such that $(x,y) \in E(G)$ only if $x<y$ (see, e.g.,
\cite[Section~5.7]{thulasiraman2011graphs}).\footnote{A different way to state
this is that any partial order on a finite set can be extended to a total order
on this set.}

Using this fact, we first show that the sequence $\rho(G^{1}),\rho(G^{2}),\ldots$ is supermultiplicative,
which will imply that the limit on the RHS of~\eqref{eq:alex} equals the supremum. 
Choose for each~$n$ some $A_n \subseteq V(G)^n$ that achieves~$\rho(G^{n})$, i.e.,
$A_n$ induces an acyclic subgraph of $G^{n}$ and $\card{A_n} =
\rho(G^{n})$. We show that $A_n \times A_m$ induces an acyclic subgraph of
$G^{n+m}$ and hence that
\begin{align}
\rho\bigl(G^{n+m}\bigr) &\geq \card{A_n \times A_m}\\
&= \rho(G^{n}) \rho(G^{m}).\label{eq:supermult}
\end{align}
Label the vertices in $A_n$ with the numbers $1,\ldots,\card{A_n}$ 
so that $(x,x') \in E(G^n) \cap (A_n\times A_n)$
implies $x < x'$. Similarly label the vertices in $A_m$. 
To reach a contradiction, assume that $(x_1,y_1),\ldots,(x_\eta,y_\eta)$ is a
cycle in the subgraph of $G^{n+m}$ induced by $A_n \times A_m$. 
From the definition of strong product and the labeling of the vertices it
follows that $x_1 < x_\eta$ or $y_1< y_\eta$. Consequently, there cannot be an edge from $(x_\eta,y_\eta)$ to $(x_1,y_1)$ in this
subgraph, which contradicts the assumption that
$(x_1,y_1),\ldots,(x_\eta,y_\eta)$ is a cycle. 

%Proceeding similarly as in \cite{telatar1992phd}, we will show that $A_n
%\times A_m$ induces an acyclic subgraph of~$G^{\wedge (n+m)}$ and hence that
%\begin{align}
%\rho\bigl(G^{\wedge (n+m)}\bigr) &\geq \card{A_n \times A_m}\notag\\
%&= \rho(G^{\wedge n}) \rho(G^{\wedge m}).\label{eq:supermult}
%\end{align}
%We will prove this by contradiction. To arrive at a contradiction, 
%assume that we can find $\eta \geq 2$ distinct vertices
%$(\bfx_1,\bfy_1),\ldots,(\bfx_\eta,\bfy_\eta)$ in
%$A_n \times A_m$ such
%that $((\bfx_j,\bfy_j),(\bfx_{j+1},\bfy_{j+1})) \in E(G^{\wedge (n+m)})$ for all
%$j\in \{1,\ldots, \eta\}$ and
%$(\bfx_{\eta+1},\bfy_{\eta+1})=(\bfx_1,\bfy_1)$. If the $\bfx_j$'s were all equal, 
%then $\bfy_1,\ldots,\bfy_{\eta}$ would form a cycle in the subgraph of
%$G^{\wedge m}$ induced by~$A_m$, contradicting our choice of~$A_m$. Thus, we can extract a
%subsequence $\bfx_{n_1},\ldots,\bfx_{n_{\kappa}}$ of length $\kappa\geq 2$ such that $\bfx_{n_j} \neq
%\bfx_{n_{j+1}}$ and $(\bfx_{n_j},\bfx_{n_{j+1}}) \in E(G^{\wedge n})$ for all
%$j\in\{1,\ldots,\kappa\}$, where
%$\bfx_{n_{\kappa+1}} = \bfx_{n_1}$. Then either all the $\bfx_{n_j}$'s are distinct,
%in which case they form a cycle in the subgraph of $G^{\wedge n}$ induced by
%$A_n$, or we can extract a cycle by looking for the smallest positive integer $\nu$ such that $\bfx_{n_j} = \bfx_{n_{j+\nu}}$
%for some $j\in \{1,\ldots,\kappa-1\}$. Since either case would contradict
%our assumption about $A_n$, we conclude that the subgraph of $G^{\wedge n+m}$
%cannot contain a cycle and hence that~\eqref{eq:supermult} holds. 

As to~\eqref{eq:alex}, we first show that
\begin{equation}
\label{eq:acyclic_186}
\Sigma(G) = \log \card{V(G)},\quad\text{for all acyclic $G$.}
\end{equation}
Note that this will prove Theorem~\ref{thm:alex} in the special case where $G$
is acyclic. Indeed, in this case $\rho(G)= \card{V(G)}$, so~\eqref{eq:supermult} 
implies $\rho(G^n) \geq \card{V(G)}^n$. And since clearly $\rho(G^n) \leq
\card{V(G)}^n$, we thus have 
\begin{equation}
\rho(G^n) = \card{V(G)}^n,\quad\text{for all acyclic $G$.} 
\end{equation}
To prove~\eqref{eq:acyclic_186}, note that $\alpha(G^n) \leq
\card{V(G)}^n$ and hence $\Sigma(G) \leq \log \card{V(G)}$ (this is
true for any~$G$, not just acyclic), so it only remains to
prove the reverse inequality. 
Since~$G$ is acyclic, we may label its vertices with the numbers
$1,\ldots,\card{V(G)}$ so that there is an edge 
from $x$ to $y$ in $G$ only if $x<y$. We then define the weight 
of a vertex~$\bfx$ in~$G^{n}$ as the sum of the labels of its $n$ components. 
Thus, the weight is a number between~$n$ and~$n \card{V(G)}$. 

As we next show, if $A$ is a subset of $V(G)^n$ all of whose members have the same weight, 
then~$A$ is an independent set in $G^{n}$. Indeed, if
$\bfx$ and $\bfy$ are distinct vertices in $A$, then~$x^{(j)} > y^{(j)}$, say, 
for some~$j\in\{1,\ldots,n\}$. Since $\bfx$ and $\bfy$ have equal weight, 
there must also be some~$k\neq j$ for which~$x^{(k)}<y^{(k)}$.
Thus, $(x^{(j)},y^{(j)})\notin E(G)$ 
and $(y^{(k)},x^{(k)})\notin E(G)$, 
so there is no edge from $\bfx$ to $\bfy$ and no edge from $\bfy$ to
$\bfx$ in $G^{n}$. 

If we partition $V(G)^n$ by putting in the same class all
vertices of the same weight, then one of the $n\card{V(G)}-n+1$ different 
classes must have at least
\begin{equation}
\frac{\card{V(G)}^n}{n\card{V(G)}-n+1} %\geq \frac{\card{V(G)}^{n-1}}{n},
\end{equation}
members. Thus, 
\begin{equation}
\frac{1}{n} \log \alpha(G^{n}) \geq \log \card{V(G)} - \frac{1}{n} \log\bigl( n
\card{V(G)} - n +1\bigr),
\end{equation}
and letting $n$ tend to infinity establishes $\Sigma(G) \geq \log \card{V(G)}$
and hence proves~\eqref{eq:acyclic_186}.

To complete the proof of~\eqref{eq:alex}, let $G$ be any graph (not necessarily
acyclic) and let $\lambda$ denote the limit of~$n^{-1} \log \rho(G^{n})$ as $n$
tends to infinity (i.e., the supremum).  
For a given~$\delta>0$ select $\nu$ so that 
\begin{equation}
\label{eq:1059_32}
\frac{1}{\nu} \log \rho(G^{\nu}) \geq \lambda - \delta.
\end{equation}
Choose $A\subseteq V(G)^\nu$ that achieves $\rho(G^{\nu})$ and let $H$
denote the acyclic subgraph of $G^{\nu}$ it induces. Since $H^{m}$ is the subgraph of $G^{\nu m}$
induced by $A^m$, 
\begin{equation}
\frac{1}{\nu m} \log \alpha(G^{\nu m}) \geq \frac{1}{\nu m} \log
\alpha(H^{m}).
\end{equation}
Letting $m$ tend to infinity, we obtain
\begin{equation}
\Sigma(G) \geq \frac{1}{\nu} \Sigma(H).\label{eq:1071_342}
\end{equation}
Since $H$ is acyclic, we can substitute it for $G$ in~\eqref{eq:acyclic_186} to
obtain
\begin{align}
\frac{1}{\nu} \Sigma(H)&= \frac{1}{\nu} \log \card{A}\\
&=\frac{1}{\nu} \log \rho\bigl(G^{\nu}\bigr)\label{eq:1077_23},
\end{align}
where \eqref{eq:1077_23} follows because $A$ achieves $\rho(G^{\nu})$.
Combining~\eqref{eq:1071_342}, \eqref{eq:1077_23}, and~\eqref{eq:1059_32} 
shows that $\Sigma(G) \geq \lambda - \delta$. Since this is true for every $\delta>0$, 
\begin{equation}
\Sigma(G) \geq \lambda.
\end{equation}
On the other hand, a graph with no edges is trivially acyclic, so $\alpha(G^{n}) \leq
\rho(G^{n})$ and hence~$\Sigma(G) \leq \lambda$.
\qed

\section{Proof of Theorem~\ref{thm:caro}}
\label{appendix:b}
Let $<$ be a total ordering of the vertices of $G$ 
and consider the subset~$A\subseteq V(G)$ comprising all $x\in V(G)$ such that if
$(x',x) \in E(G)$ for some $x'\in V(G)$, then $x'<x$. The subgraph of~$G$ 
induced by $A$ is acyclic. Indeed, if~$x_1,\ldots,x_\eta$ is a path in this subgraph, then
$x_1<x_\eta$, so we cannot have $(x_\eta,x_1) \in E(G)$. Thus, 
\begin{equation}
\label{eq:1100_824}
\card{A} \leq \rho(G).
\end{equation}
Suppose now that $<$ is drawn uniformly at random among all total orderings of
$V(G)$. Then
\begin{equation}
\label{eq:prob_equal_indegree}
\Pr(x \in A) = \frac{1}{1+\indeg(x,G)},\quad \text{for all $x\in V(G)$.}
\end{equation}
Indeed, $x$ is in $A$ if, and only if, it is the greatest vertex in the
set
\begin{equation}
B=\{x\}\cup \{x': (x',x) \in E(G)\}.
\end{equation}
Since $<$ is drawn uniformly at random, every vertex in $B$ has the same probability of being the greatest element in
$B$, so~\eqref{eq:prob_equal_indegree} follows by noting that $\card{B}=1+\indeg(x,G)$.

Summing both sides of~\eqref{eq:prob_equal_indegree} over all vertices of $G$ yields
\begin{equation}
\label{eq:1006_3432}
\sum_{x\in V(G)} \frac{1}{1+\indeg(x,G)} = \sum_{x\in V(G)} \Pr(x\in A).
\end{equation}
By writing $\Pr(x\in A)$ as the expectation of the indicator function of the event
$\{x\in A\}$ and by swapping summation and expectation, we see that the RHS
of~\eqref{eq:1006_3432} is the
expected cardinality of~$A$. This expected cardinality cannot exceed $\rho(G)$
because~\eqref{eq:1100_824} holds for every realization of~$<$.
\qed

\section{Remarks about the Definition of Sperner Capacity}
\label{appendix:sperner_def}
It was pointed out in Section~\ref{sec:graphs} that the definition of Sperner
capacity used in this paper \eqref{eq:sperner} is not standard. 
Here we compare this definition to the original one given in
\cite{gargano1992qualitative} and discuss their
equivalence. But first, we need some more notation.

The weak product of two directed graphs $G$ and $H$ is denoted by $G \cdot H$.
The vertex set of $G\cdot H$ is $V(G) \times V(H)$ and there is an edge from
$(x,y)$ to $(x',y')$ in $G\cdot H$ if $(x,x') \in
E(G)$ or $(y,y')\in E(H)$. The $n$-fold weak product of $G$ with itself is
denoted by $G^{\vee n}$. 

The complement $G^c$ of a directed graph $G$ has vertex set $V(G^c)=V(G)$ 
and edge set 
\begin{equation*}
E(G^c) = \bigl(V(G) \times V(G)\bigr) \setminus 
\bigl(E(G) \cup \{(x,x): x\in V(G)\}\bigr), 
\end{equation*}
i.e., $E(G^c)$ is 
the set complement of $E(G)$ in $V(G) \times V(G)$ minus the self-loops. Note
that $(G^c)^c = G$. 

A subset $A\subseteq V(G)$ is a \emph{symmetric clique} in $G$ 
if the subgraph it induces 
is complete, i.e., if $(x,y) \in E(G)$ whenever $x$ and $y$ are distinct vertices
in $A$. The largest cardinality of a symmetric clique in $G$ is denoted 
by~$\omega(G)$. In \cite{gargano1992qualitative}, 
the Sperner capacity of $G$ is defined as
\begin{equation}
\label{eq:sperner_def_standard}
\Sigma_0(G) = \lim_{n\to\infty} \frac{1}{n} \log \omega(G^{\vee n}).
\end{equation}
(We use $\Sigma_0$ instead of $\Sigma$ to distinguish it 
from our nonstandard definition~\eqref{eq:sperner}.)
This definition is equivalent to the definition~\eqref{eq:sperner} in the
following sense: For every directed graph $G$, 
\begin{equation}
\label{eq:equiv_sperner1093}
\Sigma(G) = \Sigma_0(G^c).
\end{equation}
To prove this relationship, 
first note that an independent set in~$G$ is a symmetric clique 
in~$G^c$ and vice versa. Hence $\alpha(G) = \omega(G^c)$. Moreover, one easily
verifies that $(G^n)^c = (G^c)^{\vee n}$. Consequently, 
\begin{align}
\alpha(G^n) &= \omega((G^n)^c)\\
&=\omega((G^c)^{\vee n}),
\end{align}
and~\eqref{eq:equiv_sperner1093} follows.

The reason we prefer~\eqref{eq:sperner} over~\eqref{eq:sperner_def_standard}
is Theorems~\ref{thm:caro} and~\ref{thm:alex}.

An anonymous reviewer pointed out to us that 
an equivalent version of Theorem~\ref{thm:alex} is well-known. 
To state it, we need the concept of \emph{transitive cliques}:

A subset of vertices $A\subseteq V(G)$ is called a transitive clique
in $G$ if there exists a total order $<$ on $A$ with the property
\begin{equation}
(\text{$x,y\in A$ and $x<y$}) \implies (x,y) \in E(G).
\end{equation}
It is easily verified that a subset $A\subseteq V(G)$ induces 
an acyclic subgraph of $G$ if, and
only if, $A$ is a transitive clique in $G^c$.
Thus, if $\omega_{\textnormal{tr}}(G)$ denotes the size of the largest
transitive clique in $G$, then 
\begin{equation}
\rho(G)=\omega_{\textnormal{tr}}(G^c).
\end{equation}
Hence, Theorem~\ref{thm:alex} is equivalent to the statement that
\begin{equation}
\Sigma_0(G) = \lim_{n\to\infty} \frac{1}{n}\log \omega_{\textnormal{tr}}(G^{\vee n}),
\end{equation}
i.e., that $\omega$ can be replaced with $\omega_{\textnormal{tr}}$ in the
original definition of Sperner capacity~\eqref{eq:sperner_def_standard}.
Although we could not find a reference where this is explicitly proven, it was
certainly used in~\cite{korner1998clique}.

% use section* for acknowledgement
\section*{Acknowledgment}

We are grateful for valuable comments and suggestions 
from the anonymous reviewers and the Associate Editor. 

%The authors would like to thank...

% Can use something like this to put references on a page
% by themselves when using endfloat and the captionsoff option.
\ifCLASSOPTIONcaptionsoff
  \newpage
\fi

\end{document}